\providecommand{\keywords}[1]
{
  \small	
  \textbf{\textit{Keywords---}} #1
}
\definecolor{verbgray}{gray}{0.9}
\definecolor{webgreen}{rgb}{0,.5,0}
\definecolor{webbrown}{rgb}{.6,0,0}
\DeclareMathOperator{\rankB}{rankB}
\DeclareMathOperator{\rankU}{rankU}
\DeclareMathOperator{\LPS}{PBA}
\newsavebox{\mybox}
\begin{document}

\theoremstyle{plain}
\newtheorem{theorem}{Theorem}
\newtheorem{corollary}[theorem]{Corollary}
\newtheorem{lemma}[theorem]{Lemma}
\newtheorem{proposition}[theorem]{Proposition}

\theoremstyle{definition}
\newtheorem{definition}[theorem]{Definition}
\newtheorem{example}[theorem]{Example}
\newtheorem{conjecture}[theorem]{Conjecture}

\theoremstyle{remark}
\newtheorem{remark}[theorem]{Remark}

\author{Daniel Gabric \\
Department of Math/Stats \\
University of Winnipeg \\
Winnipeg, MB  R3B 2E9 \\
Canada\\
\href{mailto:d.gabric@uwinnipeg.ca}{\tt d.gabric@uwinnipeg.ca} }
\date{}

\title{Ranking and unranking bordered and unbordered words}

\maketitle

\begin{abstract}
A \emph{border} of a word $w$ is a word that is both a non-empty proper prefix and suffix of $w$. If $w$ has a border, then it is said to be \emph{bordered}; otherwise, it is said to be \emph{unbordered}. The main results of this paper are the first algorithms to rank and unrank length-$n$ bordered and unbordered words over a $k$-letter alphabet. We show that, under the unit-cost RAM model, ranking bordered and unbordered words can be done in $O(kn^3)$ time using $O(n)$ space, and unranking them can be done in $O(n^4k\log k)$ time using $O(n)$ space. %We also show how to generate all length-$n$ bordered and unbordered words over a $k$-letter alphabet in constant amortized time per word. 
\end{abstract}

\keywords{Ranking, unranking, bordered words, unbordered words, bifix-free words}

\section{Introduction}

A word $u$ is said to be a \emph{border} of a word $w$ if $u$ is a non-empty proper prefix and suffix of $w$. If $w$ has a border, then it is said to be \emph{bordered}; otherwise, it is said to be \emph{unbordered}. For example, the word ${\tt alfalfa}$ has ${\tt a}$ and ${\tt alfa}$ as borders, so it is bordered. The word ${\tt unbordered}$ has no borders, so it is unbordered. Bordered and unbordered words are important combinatorial objects in computer science. Borders are fundamental to the most efficient string searching algorithms~\cite{Knuth&Morris&Pratt:1977, Boyer&Moore:1977}. Borders are also useful for figuring out statistics on words~\cite{Guibas&Odlyzko:1980,Guibas&Odlyzko:1981}. Additionally, unbordered words appear as optimal sync words in frame synchronization~\cite{Massey:1972, Scholtz:1980}.

In computer science, three problems often arise when a particular combinatorial object is discovered. We want to know how to efficiently count, exhaustively generate, and randomly generate distinct instances of the object. Nielsen~\cite{Nielsen:1973} gave a recurrence to count the number $u_k(n)$ of length-$n$ unbordered words over a $k$-letter alphabet. Since every word is either bordered or unbordered, the number of length-$n$ bordered words over a $k$-letter alphabet is $k^n-u_k(n)$. Using his recurrence he showed that there are $\Theta(k^n)$ length-$n$ bordered and unbordered words. Nielsen also showed how to exhaustively generate all bordered and unbordered words using a recursive procedure. This procedure requires $O(n)$ time per length-$n$ word generated. The problem of efficiently randomly generating bordered and unbordered words is open.

One way of randomly generating an instance of a combinatorial object is with the use of ranking and unranking algorithms. Let $o_1,o_2,\ldots, o_m$ be an ordered list of distinct combinatorial objects. The \emph{rank} of $o_i$ is $i$, the position of $o_i$ within the list. A ranking algorithm for the list computes $i$ given $o_i$. An unranking algorithm for the list computes $o_i$ given $i$. From here it is easy to see that one can randomly generate one of the objects in the list by randomly generating a rank between $1$ and $m$ and then unranking the object at that rank. A na\"{i}ve way to rank and unrank is to generate the list of objects and then determine the position of a particular object. However, this is not ideal, especially if the number of distinct objects is exponential. Thus, we want efficient ranking and unranking algorithms. By efficient we mean that the ranking and unranking algorithms run in $o(f(n))$ time using $o(f(n))$ space where $f(n)$ is the number of distinct objects of ``order" $n$.

Efficient ranking and unranking algorithms have been discovered for many different combinatorial objects, such as permutations~\cite{Myrvold&Ruskey:2001, Mares&Straka:2007}, trees~\cite{Pallo:1986, Li:1986, Gupta:1983, Gupta:1982}, and necklace variations~\cite{Hartman&Sawada:2019, Sawada&Williams:2017, Adamson:2021,Adamson:2022}. See~\cite{Ruskey:2003} for a more general survey on the topic of ranking and unranking algorithms. In this paper, we consider bordered and unbordered words. We present the first efficient ranking and unranking algorithms for bordered and unbordered words in lexicographic order. 

The rest of the paper is structured as follows. In Section~\ref{section:prel} we introduce some definitions and results that are necessary to prove our main results. In Section~\ref{section:recurrence} we present a recurrence that will serve as the basis for our ranking algorithms. In Section~\ref{section:ranking} we present our ranking algorithms for bordered and unbordered words. In Section~\ref{section:unrank} we show how to unrank bordered and unbordered words. %In Section~\ref{section:generation} we give a constant amortized time algorithm to generate all bordered and unbordered words. 
In Section~\ref{section:conclusion} we give some concluding remarks and an open problem. For all the algorithms presented in this paper, the unit-cost RAM model is assumed. Under this computational model, integer variables use constant space and integer arithmetic takes constant time. A complete C implementation of our ranking and unranking algorithms can be found here: \url{https://github.com/DanielGabric/RankUnrank}.

\section{Preliminaries}\label{section:prel}
Let $\Sigma_k$ denote the alphabet $\{1,2,\ldots, k\}$ where $1<2<\cdots < k$. Let $\Sigma_k^n$ denote the set of all length-$n$ words over $\Sigma_k$. Let $x=x_1x_2\cdots x_m \in \Sigma_k^m$ and $y=y_1y_2\cdots y_n \in \Sigma_k^n$. Then $x<y$ in \emph{lexicographic order} either if $x$ is a prefix of $u$ or if $x_i< y_i$ for the smallest $i$ such that $x_i\neq y_i$.

Let $w=w_1w_2\cdots w_n$ be a length-$n$ word. The \emph{unbordered prefix indicator} $a[1..n]$ of $w$ is a length-$n$ integer array such that $a[i]=1$ if $w_1w_2\cdots w_i$ is unbordered and $a[i]=0$ otherwise. The \emph{border indicator} $b[1..n]$ of $w$ is a length-$n$ integer array such that $b[i]=1$ if $w$ has a border of length $i$ and $b[i]=0$ otherwise. For example, consider the binary word $w=011101110$. The unbordered prefix indicator of $w$ is $a[1..9] = [1,1,1,1,0,0,0,0,0]$. The border indicator of $w$ is $b[1..9] = [1,0,0,0,1,0,0,0,0]$.

\begin{lemma}\label{lemma:unbPref}
 Let $n\geq 1$ be an integer. Let $w=w_1w_2\cdots w_n$ be a word of length $n$. Then the unbordered prefix indicator of $w$ can be computed in $O(n)$ time.
\end{lemma}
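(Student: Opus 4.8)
The plan is to reduce the computation of the unbordered prefix indicator to the computation of the classical Knuth--Morris--Pratt failure function (also called the border array) of $w$, which is well known to be computable in $O(n)$ time~\cite{Knuth&Morris&Pratt:1977}.

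Recall that the failure function $\pi[1..n]$ of $w$ is defined by letting $\pi[i]$ be the length of the longest proper prefix of $w_1w_2\cdots w_i$ that is also a suffix of $w_1w_2\cdots w_i$, with the convention that $\pi[i]=0$ when no such non-empty prefix exists. In the terminology of this paper, $\pi[i]$ is exactly the length of the longest border of $w_1w_2\cdots w_i$, and $\pi[i]=0$ holds precisely when $w_1w_2\cdots w_i$ has no border at all: indeed, the set of border lengths of a word is a finite set of positive integers, so it is non-empty if and only if it has a maximum. Hence $w_1w_2\cdots w_i$ is unbordered if and only if $\pi[i]=0$, so that $a[i]$ is completely determined by $\pi[i]$.

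The algorithm therefore first computes $\pi[1..n]$ in $O(n)$ time using the standard failure-function construction, and then, in a single pass over $i=1,2,\ldots,n$, sets $a[i]=1$ if $\pi[i]=0$ and $a[i]=0$ otherwise. This second pass takes $O(n)$ time, so the whole computation runs in $O(n)$ time, as claimed.

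The only point that needs care is matching the notion of a border used here --- a \emph{non-empty} proper prefix that is also a suffix --- with the convention that $\pi[i]=0$ encodes the \emph{absence} of such a prefix; this is exactly why the correct test is $\pi[i]=0$, the properness being automatic from the definition of $\pi$. I expect no substantial obstacle beyond this bookkeeping: the content of the lemma is essentially the observation that the unborderedness of each prefix can be read off directly from the KMP failure function, whose linear-time construction is standard.
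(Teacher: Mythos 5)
Your proof is correct and follows exactly the same route as the paper: compute the Knuth--Morris--Pratt failure function in $O(n)$ time and read off unborderedness of each prefix from the entries equal to $0$. No issues.
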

\begin{proof}
Recall the failure function of a word, also known as the longest prefix suffix array, from the Knuth-Morris-Pratt string searching algorithm~\cite{Knuth&Morris&Pratt:1977}. The failure function of $w$ is a length-$n$ integer array $A[1..n]$ such that $A[i]=j$ if and only if the longest proper prefix of $w_1w_2\cdots w_i$ that matches a suffix of $w_1w_2\cdots w_i$ is of length $j$. Thus $A[i]=0$ if and only if $w_1w_2\cdots w_i$ is unbordered. Since the failure function encodes the longest borders of all prefixes of a word, we call it the \emph{Prefix Border Array} ($\LPS$). Let $\LPS[1..n]$ be the $\LPS$ array of the word $w$. One can easily obtain the unbordered prefix indicator of $w$ from $\LPS[1..n]$ since $\LPS[i]=0$ implies $w_1w_2\cdots w_i$ is unbordered. Knuth, Morris, and Pratt~\cite{Knuth&Morris&Pratt:1977} showed that $\LPS[1..n]$ can be computed in $O(n)$ time. Therefore, the unbordered prefix indicator of $w$ can also be computed in $O(n)$ time.
\end{proof}
\begin{lemma}\label{lemma:bordInd}
 Let $n\geq 1$ be an integer. Let $w$ be a word of length $n$. Then the border indicator of $w$ can be computed in $O(n)$ time.
\end{lemma}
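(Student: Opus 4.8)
The plan is to reduce the computation of the border indicator to the $\LPS$ array (the failure function, or Prefix Border Array) introduced in the proof of Lemma~\ref{lemma:unbPref}, which can be computed in $O(n)$ time by the Knuth--Morris--Pratt algorithm. The key observation is that although $\LPS[n]$ records only the \emph{longest} border of $w$, every shorter border of $w$ can be recovered from it by iterating: if $u$ is the longest border of $w$ and $v$ is any strictly shorter border of $w$, then $v$ is a prefix and a suffix of $u$ (since $v$ is both a prefix and a suffix of $w$ and $|v|<|u|\le n$, so $v$ is a prefix of the length-$|u|$ prefix of $w$ and a suffix of the length-$|u|$ suffix of $w$, both of which equal $u$), hence $v$ is a border of $u$. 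Consequently the set of border lengths of $w$ is exactly $\{\ell_1,\ell_2,\ldots\}$, where $\ell_1=\LPS[n]$ and $\ell_{j+1}=\LPS[\ell_j]$, the chain terminating at the first index where $\LPS$ evaluates to $0$.

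Given this, the algorithm is immediate. First compute $\LPS[1..n]$ in $O(n)$ time. Initialize $b[1..n]$ to all zeros. Then set $i\leftarrow \LPS[n]$ and, while $i>0$, set $b[i]\leftarrow 1$ and update $i\leftarrow \LPS[i]$. Since $\LPS[i]<i$ for every $i$, the value of $i$ strictly decreases at each iteration, so the loop runs at most $n$ times, each iteration costing $O(1)$. Hence the whole procedure runs in $O(n)$ time, using $O(n)$ space for the arrays.

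The only real content is the combinatorial claim that iterating $\LPS$ starting from $\LPS[n]$ enumerates precisely the border lengths of $w$. I would prove it by induction on the length of a border, using that (i) $\LPS[n]$ is itself a border length of $w$; (ii) a border of a border of $w$ is again a border of $w$, so each $\ell_j$ produced by the chain is a genuine border length; and (iii) by the observation above, any border of $w$ shorter than $\ell_1$ is a border of the length-$\ell_1$ prefix of $w$, and so by the induction hypothesis its length appears among $\LPS[\ell_1],\LPS[\LPS[\ell_1]],\ldots$. The main subtlety to handle carefully is the edge behaviour: when $w$ is unbordered, $\LPS[n]=0$ and $b$ correctly stays all zeros; and one must confirm that $b[n]$ is never set, which holds because $\LPS[n]<n$ and the chain is strictly decreasing, so only \emph{proper} prefixes/suffixes are ever counted as borders.
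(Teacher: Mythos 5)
Your proof is correct, but it takes a different route from the paper's. The paper reduces the border indicator to the \emph{auto-correlation} of $w$: it recalls that a length-$n$ word has period $p$ if and only if it has a border of length $n-p$, and then cites Corollary 2 of Lind's paper for an $O(n)$ computation of the auto-correlation (which itself rests on the KMP failure function). You instead argue directly and self-containedly: compute $\LPS[1..n]$ and mark the chain $\LPS[n], \LPS[\LPS[n]], \ldots$ until reaching $0$, proving that this chain enumerates exactly the border lengths of $w$ because any border shorter than the longest border is itself a border of that longest border. Your correctness argument for the chain is sound, the termination argument ($\LPS[i]<i$) gives the $O(n)$ bound, and your care with the edge cases (unbordered $w$, never marking $b[n]$) is right. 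What the paper's approach buys is brevity --- the combinatorial work is outsourced to a citation and to the classical period--border duality; what your approach buys is a fully elementary, citation-free proof that also matches the actual implementation one would write (indeed, it is essentially the \textsc{ComputeBI} routine). Either is acceptable; yours is arguably the more transparent of the two.
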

\begin{proof}
  A length-$n$ word $w=w_1w_2\cdots w_{n}$ is said to have a \emph{period} $p$ if $w_{i}=w_{i+p}$ for all $1\leq i \leq n-p$. The \emph{auto-correlation}~\cite{Guibas&Odlyzko:1980, Guibas&Odlyzko:1981} of $w$ is a binary word $a_1a_2\cdots a_n$ such that $a_i=1$ if and only if $i$ is a period $w$. It is easy to see that a length-$n$ word has a period $p$ if and only if it has a border of length $n-p$. Thus one can easily obtain the border indicator of $w$ from the auto-correlation of $w$. Corollary 2 in~\cite{Lind:1989} shows how to compute the auto-correlation of a length-$n$ word in $O(n)$ time using the failure function from the Knuth-Morris-Pratt string searching algorithm. Therefore, we can also compute the border indicator of $w$ in $O(n)$ time.
\end{proof}

The following two lemmas will serve as the basis for the recurrence in Section~\ref{section:recurrence}.

\begin{lemma}\label{lemma:shortestUnb}
Let $w$ be a length-$n$ bordered word. Let $u$ be a border of $w$. Then $u$ is the shortest border of $w$ if and only if $u$ is unbordered.
\end{lemma}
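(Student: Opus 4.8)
The plan is to prove both implications by contradiction, relying on two elementary facts: (i) the prefix relation and the suffix relation are transitive, and (ii) if $x$ and $y$ are both prefixes (resp.\ both suffixes) of some word and $|x| \le |y|$, then $x$ is a prefix (resp.\ suffix) of $y$. Both facts follow immediately from the definition of prefix/suffix, so I would state them in one line and use them freely.

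For the forward direction, suppose $u$ is the shortest border of $w$ but, for contradiction, that $u$ is bordered, say with border $v$. Then $v$ is a non-empty proper prefix and suffix of $u$, and $u$ is a non-empty proper prefix and suffix of $w$; by transitivity $v$ is a prefix and a suffix of $w$. Since $1 \le |v| < |u| < |w|$, the word $v$ is a non-empty proper prefix and suffix of $w$, i.e.\ a border of $w$ strictly shorter than $u$, contradicting minimality of $u$.

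For the reverse direction, suppose $u$ is an unbordered border of $w$ but, for contradiction, that it is not the shortest; let $v$ be a border of $w$ with $|v| < |u|$. Both $v$ and $u$ are prefixes of $w$ with $|v| < |u|$, so by fact (ii) $v$ is a prefix of $u$; likewise both are suffixes of $w$, so $v$ is a suffix of $u$. Since $1 \le |v| < |u|$, this $v$ is a non-empty proper prefix and suffix of $u$, so $u$ is bordered, a contradiction.

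There is no genuinely hard step here; the only thing to be careful about is bookkeeping the non-emptiness and strictness (``proper'') conditions, making sure the length inequalities $1 \le |v| < |u| < |w|$ are tracked so that at each transitivity step the intermediate word remains a \emph{proper} border rather than possibly equalling the whole word. Once those inequalities are in place, both contradictions are immediate.
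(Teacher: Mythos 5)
Your proposal is correct and follows essentially the same route as the paper, which proves the contrapositive of both directions using exactly the two observations you isolate (a border of the border is a shorter border of $w$; a shorter border of $w$ is a border of $u$). Your version merely makes the transitivity and length-bookkeeping explicit, which the paper leaves implicit.
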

\begin{proof} 
We prove the contrapositive of both directions. If $u$ is bordered, then the border of $u$ is a shorter border of $w$. If $u$ is not the shortest border, then there exists a shorter border $u'$ of $w$. But this border is now both a non-empty proper prefix and suffix of $u$, so $u$ is bordered.
\end{proof}
\begin{lemma}\label{lemma:shortestHalf}
    Let $w$ be a length-$n$ bordered word. Let $u$ be the shortest border of $w$. Then $|u| \leq n/2$.
\end{lemma}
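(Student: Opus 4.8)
The plan is to argue by contradiction. Suppose $u$ is the shortest border of $w$ but $|u| > n/2$. Since $u$ is a border, it occurs in $w$ both as a prefix (occupying positions $1$ through $|u|$) and as a suffix (occupying positions $n-|u|+1$ through $n$). Because $|u| > n/2$, these two occurrences overlap, and the overlapping block occupies positions $n-|u|+1$ through $|u|$, so it has length $\ell = 2|u|-n$.

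Next I would verify that this overlap is a genuine border of $u$. Its length $\ell = 2|u|-n$ satisfies $\ell \ge 1$: since $|u| > n/2$ we have $|u| \ge \lceil (n+1)/2 \rceil$, hence $2|u|-n \ge 1$. It also satisfies $\ell < |u|$: since $u$ is a \emph{proper} prefix of $w$ we have $|u| < n$, hence $2|u|-n < |u|$. Reading the block off within the prefix-occurrence of $u$ exhibits it as the length-$\ell$ prefix of $u$, while reading it off within the suffix-occurrence of $u$ exhibits it as the length-$\ell$ suffix of $u$. Thus $u$ has a non-empty proper prefix that is also a suffix, i.e., $u$ is bordered.

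But Lemma~\ref{lemma:shortestUnb} states that the shortest border of a bordered word is unbordered, so $u$ must be unbordered, a contradiction. Hence $|u| \le n/2$.

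I expect the only mildly delicate point to be the index bookkeeping that pins down the overlap region and confirms $0 < \ell < |u|$; once that is in hand the result is an immediate appeal to Lemma~\ref{lemma:shortestUnb}. An equivalent packaging uses periods instead of overlaps: a border of length $|u|$ makes $p = n-|u|$ a period of $w$, and then $2|u|-n = |u|-p$, so $|u| > n/2$ gives $0 < |u|-p < |u|$, and the first $|u|-p$ letters of the prefix $u$ (of length $|u| > p$) form a non-empty proper border of $u$; I may mention this alternative but will carry out the overlap version as the main argument.
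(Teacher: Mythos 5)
Your proposal is correct and follows essentially the same route as the paper: both argue by contradiction that $|u| > n/2$ forces the prefix and suffix occurrences of $u$ to overlap, making $u$ bordered. The only cosmetic difference is that you close by citing Lemma~\ref{lemma:shortestUnb} while the paper re-derives the contradiction directly (the border of $u$ is a shorter border of $w$); your explicit verification that the overlap has length strictly between $0$ and $|u|$ is a welcome bit of extra care that the paper leaves implicit.
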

\begin{proof}
Suppose to the contrary, that $|u|> n/2$. Then $u$ must overlap itself within $w$. So $u$ is bordered. But this border of $u$ must also be a border of $w$. This contradicts the assumption that $u$ is the shortest border of $w$. Thus $|u| \leq n/2$. 
\end{proof}

\section{Recurrence}\label{section:recurrence}
In this section we give a recurrence for the number of bordered words with a given prefix. This is the basis for our ranking algorithm. Let $n\geq 1$ be an integer. Let $u$ be a word of length less than or equal to $n$. Let $B_k(u,n)$ denote the number of length-$n$ bordered words over $\Sigma_k$ that have $u$ as a prefix.
\begin{theorem}
    Let $n\geq p \geq 1$ and $k\geq 2$ be integers. Let $u$ be a length-$p$ word over $\Sigma_k$. Let $b[1..p]$ be the border indicator of $u$. Let $a[1..p]$ be the unbordered prefix indicator of $u$. Then
    \[B_k(u,n) = \begin{cases} 
      \sum\limits_{i=1}^{n-p}a[i]k^{n-p-i} + \sum\limits_{i=n-p+1}^{\lfloor n/2\rfloor} a[i]b[i-(n-p)], & \text{if $n \leq 2p$;} \\
      \sum\limits_{i=1}^p a[i] k^{n-p-i} + \sum\limits_{i=p+1}^{\lfloor n/2\rfloor} (k^{i-p} - B_k(u,i))k^{n-2i}, & \text{otherwise.}
   \end{cases}\]

\end{theorem}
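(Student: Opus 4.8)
The plan is to count length-$n$ bordered words with prefix $u$ by classifying each such word according to the length $i$ of its shortest border, using Lemmas~\ref{lemma:shortestUnb} and~\ref{lemma:shortestHalf}. By Lemma~\ref{lemma:shortestHalf}, the shortest border has length at most $\lfloor n/2\rfloor$, so $i$ ranges over $1\le i\le \lfloor n/2\rfloor$; and by Lemma~\ref{lemma:shortestUnb}, a word of length $n$ with prefix $u$ has shortest border of length $i$ exactly when it has a length-$i$ border, that border is unbordered, and no shorter border exists. Since a length-$i$ border is simultaneously a prefix and a suffix, forcing it to be unbordered is equivalent to requiring its underlying length-$i$ word to be unbordered; this is where the indicator arrays enter. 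I would split into the two regimes $n\le 2p$ and $n>2p$ according to whether the border length $i$ can exceed $p=|u|$.

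First, consider a border length $i$ with $i\le p$. Then the length-$i$ border is the prefix $u_1\cdots u_i$ of $u$, so it is a valid shortest-border choice only if $u_1\cdots u_i$ is unbordered, i.e.\ $a[i]=1$. Given that, the word $w=w_1\cdots w_n$ must have $w_1\cdots w_i$ as a suffix and otherwise be unconstrained on positions $p+1,\dots,n-i$ (positions $1,\dots,p$ are fixed to $u$, positions $n-i+1,\dots,n$ are fixed to the border), so there are $k^{n-p-i}$ completions when $n-p-i\ge 0$. The subtle point is consistency: when $i\le p$ and the suffix block $w_{n-i+1}\cdots w_n$ overlaps the prefix $u$ (which happens precisely when $n-i<p$, i.e.\ $i>n-p$), the forced suffix must agree with $u$ on the overlap. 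That agreement says exactly that $u$ has a border of length $i-(n-p)$... wait, more carefully: the overlap has length $p-(n-i)=i-(n-p)$, and agreement of $u_1\cdots u_i$ sitting at position $n-i+1$ with $u$ means $u$ has a border of length $i-(n-p)$, i.e.\ $b[i-(n-p)]=1$. This reproduces the $n\le 2p$ case: the first sum handles $1\le i\le n-p$ (no overlap, $k^{n-p-i}$ completions), and the second handles $n-p<i\le\lfloor n/2\rfloor$ (full overlap of the border region with $u$, one completion, counted iff $a[i]=1$ and $b[i-(n-p)]=1$).

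Second, in the regime $n>2p$, border lengths $i$ with $i\le p$ contribute $\sum_{i=1}^p a[i]k^{n-p-i}$ exactly as above (here $n-i>p$ always, so no overlap issue), giving the first sum of the second case. For $p<i\le\lfloor n/2\rfloor$ the border is a length-$i$ word $v$ with $u$ as a proper prefix; it must be unbordered, and the number of length-$i$ words with prefix $u$ that are \emph{bordered} is by definition $B_k(u,i)$, so the number that are unbordered is $k^{i-p}-B_k(u,i)$. For each such $v$, $w$ is determined on positions $1,\dots,i$ (the prefix $v$, which contains $u$) and on positions $n-i+1,\dots,n$ (the suffix $v$); since $i\le n/2$ these two blocks are disjoint, so the middle $n-2i$ positions are free, giving $k^{n-2i}$ completions, and the product summed over $i$ yields the second sum. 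I should also note that $v$ being unbordered already guarantees $v$ is the \emph{shortest} border of $w$ (Lemma~\ref{lemma:shortestUnb}) and that distinct $(i,v)$ pairs give disjoint sets of words, so there is no double counting.

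The main obstacle is the bookkeeping of the overlap case in the $n\le 2p$ regime: one must verify carefully that when the length-$i$ suffix of $w$ lies entirely inside the prefix $u$ (equivalently $i\le p$ and $n-i\le p$), the only constraint beyond $a[i]=1$ is precisely $b[i-(n-p)]=1$, and that in the boundary subrange there is exactly one completion rather than a positive power of $k$. The remaining steps — disjointness of the cases indexed by shortest-border length, the dimension counts $k^{n-p-i}$ and $k^{n-2i}$, and identifying the count of unbordered length-$i$ extensions of $u$ as $k^{i-p}-B_k(u,i)$ — are routine once the overlap analysis is pinned down.
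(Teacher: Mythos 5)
Your proposal is correct and follows essentially the same route as the paper: classify by the length $i$ of the shortest border, invoke Lemmas~\ref{lemma:shortestUnb} and~\ref{lemma:shortestHalf} to restrict to unbordered borders of length at most $\lfloor n/2\rfloor$, and split into the subcases $i\le n-p$, $n-p<i\le p$ (where the overlap forces $b[i-(n-p)]=1$), and $p<i\le\lfloor n/2\rfloor$ (counted via $k^{i-p}-B_k(u,i)$). The overlap bookkeeping you flag as the main obstacle is resolved exactly as you describe, matching the paper's derivation that $u=yzxy$ has a border of length $|v|-(n-p)$.
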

\begin{proof}
    Let $w$ be a length-$n$ bordered word. Suppose that $u$ is a prefix of $w$. Let $v$ be the shortest border of $w$. Our strategy is to split up the set of all length-$n$ bordered words with $u$ as a prefix into sets, $S_1, \ldots, S_n$, such that $\sum_{i=1}^n|S_i| = B_k(u,n)$. We choose $S_i$ to be the set of all length-$n$ bordered words that have $u$ as a prefix and have a length-$i$ shortest border. These sets are clearly disjoint, and their union is just the set of all length-$n$ bordered words with $u$ as a prefix. By Lemma~\ref{lemma:shortestUnb} and Lemma~\ref{lemma:shortestHalf} we have that $v$ is unbordered and $|v| \leq n/2$. Therefore, we have $|S_i|=0$ for $i> n/2$. Since $v$ is unbordered, we have $a[|v|] = 1$ when $|v| \leq |u|=p$. There are two cases to consider. The first case is when $|u|=p$ is greater than or equal to half the length of $w$, or $2p\geq n$. The other case is when $2p < n$.
    
    Suppose $2p \geq n$. Since $v$ is unbordered and $|v|\leq n/2 \leq p$, we have that $v$ must occur as an unbordered prefix of $u$. We further split this case into two subcases, one where $|v| \leq n-p$ and one where $n-p+1\leq |v| \leq n/2$. If $|v| \leq n-p$, then $w = vxyv$ where $u = vx$ and $y$ is a word of length $n-p-|v|$. Since $v$ and $x$ are determined by $u$, we have that there are $k^{n-p-|v|}$ choices for $w$. If $n-p+1\leq |v| \leq n/2$, then the instance of $v$ that is a suffix of $w$ must also have a non-empty overlap with $u$. So $w = vxyz$ where $x$ is a possibly empty word and $y,z$ are non-empty words such that $u=vxy$ and $v=yz$. But this means that $u=yzxy$. Thus $u$ must have a border of length $|y| = |v|-|z| = |v| - (n-p)$. So $w$ is bordered if and only if $b[{|v|-(n-p)}]=1$. When summing over all possible prefixes $v$ for the case when $2p\geq n$, we use the unbordered indicator $a[1..p]$ to only include those $v$ that are unbordered. So we get that 
    \[ B_k(u,n) =\sum\limits_{i=1}^{n-p}a[i]k^{n-p-i} + \sum\limits_{i=n-p+1}^{\lfloor n/2\rfloor} a[i]b[{i-(n-p)}].\]

    Suppose $2p < n$. Again, we further split this case into two subcases, one where $|v| \leq p$, and one where $p+1\leq |v|\leq n/2$. If $|v| \leq p$, then $v$ must occur as an unbordered prefix of $u$. As in the previous case, this leads to there being $k^{n-p-|v|}$ choices for $w$. So suppose $p+1\leq |v|\leq n/2$. In this case, we have that $u$ must occur as a prefix of $v$. So this means that $v$ is unbordered and has $u$ as a prefix. The number of such words $v$ is clearly $k^{|v|-p}- B_k(u,|v|)$ (i.e., all length-$|v|$ bordered words with $u$ as a prefix subtracted from all length-$|v|$ words with $u$ as a prefix). We can write $w = vxv$ where $x$ is a word of length $n-2|v|$. There are $k^{|v|-p}- B_k(u,|v|)$ choices for $v$ and $k^{n-2|v|}$ choices for $x$. Therefore, there are $(k^{|v|-p}- B_k(u,|v|))k^{n-2|v|}$ choices for $w$ in this case. Summing over all prefixes $v$ for the case $2p < n$, and using the unbordered indicator $a[1..p]$ to only include those $v$ that are unbordered and are of length less than or equal to $|u|=p$, we get 
    \[B_k(u,n) =\sum\limits_{i=1}^p a[i] k^{n-p-i} + \sum\limits_{i=p+1}^{\lfloor n/2\rfloor} (k^{i-p} - B_k(u,i))k^{n-2i}.\]
\end{proof}
\begin{theorem}\label{theorem:calcB}
Let $n\geq 1$. Let $u$ be a word of length less than or equal to $n$. The recurrence $B_k(u,n)$ can be computed in $O(n^2)$ time using $O(n)$ space.
\end{theorem}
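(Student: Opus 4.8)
The plan is to compute $B_k(u,n)$ by a bottom-up dynamic program. Write $p=|u|$ and build a one-dimensional table $T[p..n]$ with the intended meaning $T[i]=B_k(u,i)$, filling the entries in increasing order of $i$ and applying the recurrence of the preceding theorem at each step: its first case when $i\le 2p$ and its second case when $i>2p$. The point that makes this order legitimate is that the first case refers to no values of $B_k$ whatsoever, while the second case refers to $B_k(u,j)$ only for $p+1\le j\le \lfloor i/2\rfloor$, and $\lfloor i/2\rfloor<i$. So by the time $T[i]$ is evaluated, every table entry its formula reads is already final, and no circular dependence can occur; a single forward pass suffices.

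Before the main loop I would run the $O(n)$-time preprocessing. Using Lemma~\ref{lemma:unbPref} and Lemma~\ref{lemma:bordInd}, compute the unbordered prefix indicator $a[1..p]$ and the border indicator $b[1..p]$ of $u$ in $O(p)=O(n)$ time, and fill an array $P[0..n]$ with $P[j]=k^j$ via $P[0]=1$ and $P[j]=k\cdot P[j-1]$. Under the unit-cost RAM model each $P[j]$ occupies constant space and each multiplication takes constant time, so this array costs $O(n)$ time and $O(n)$ space despite holding exponentially large numbers. With $a$, $b$, and $P$ in hand, every term occurring in either case of the recurrence for $T[i]$ — an entry of $a$ or $b$, a power of $k$ obtained as a lookup in $P$ (all the exponents that arise lie in $[0,n]$), an already-computed $T[j]$, together with a constant number of integer additions and multiplications — is produced in $O(1)$ time, and each of the (at most two) sums defining $T[i]$ has $O(n)$ summands. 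Hence each of the $n-p+1=O(n)$ table entries is computed in $O(n)$ time, for $O(n^2)$ time in total. The working storage consists of the arrays $a$, $b$, $P$, $T$, each of length $O(n)$, plus a constant number of loop counters and accumulators, so the space is $O(n)$. (In fact one only needs $T[p..\lfloor n/2\rfloor]$ plus a single final evaluation for $T[n]$, but trimming the table does not affect the asymptotics.)

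Once the recurrence of the preceding theorem is available, this is essentially a bookkeeping argument, and the two places that genuinely need checking are exactly the two I have flagged: that the recursion strictly decreases the second argument, so that the forward pass is well defined, and that the exponential size of the $k^j$ terms does not degrade the running time, which is precisely where the unit-cost RAM hypothesis is invoked and is dispatched by precomputing $P$. Neither is a real obstacle, so I expect the proof to be short.
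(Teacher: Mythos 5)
Your proof is correct and takes essentially the same approach as the paper's: precompute the two indicator arrays via Lemmas~\ref{lemma:unbPref} and~\ref{lemma:bordInd}, then fill the table of values $B_k(u,i)$ for $i=p,\ldots,n$ by a forward dynamic program in which each entry is a sum of $O(n)$ constant-time terms. The paper compresses the well-foundedness of the recursion and the unit-cost handling of the powers $k^j$ into the phrase ``standard dynamic programming techniques,'' whereas you spell both out explicitly; this is a difference only in level of detail, not in method.
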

\begin{proof}
Before starting the computation of $B_k(u,n)$, we first need to compute the border indicator of $u$ and the unbordered prefix indicator of $u$. From Lemma~\ref{lemma:unbPref} and Lemma~\ref{lemma:bordInd} we see that both the border indicator and unbordered prefix indicator of $u$ can be computed in $O(n)\in O(n^2)$ time. For each $i\geq |u|$, we have that calculating $B_k(u,i)$ involves computing a sum of $O(i)$ terms. Thus, using standard dynamic programming techniques, we can compute $B_k(u,n)$ in $O(n^2)$ time using $O(n)$ space. 
\end{proof}

\section{Ranking}\label{section:ranking}
In this section we show how to efficiently calculate the ranks of bordered and unbordered words. Let $\rankB_k(w)$ (resp. $\rankU_k(w)$) denote the rank of the word $w$ in the lexicographic listing of bordered (resp. unbordered) words of length $|w|$ over the alphabet $\Sigma_k$. 
\begin{theorem}\label{theorem:rankB}
Let $n\geq 1$ and $k\geq 2$ be integers. Let $w=w_1w_2\cdots w_n$ be a length-$n$ bordered word over $\Sigma_k$. Then
\[\rankB_k(w) = 1+\sum_{i=1}^{n}\sum_{c=1}^{w_{i}-1}B_k(w_1w_2\cdots w_{i-1} c, n).\]
\end{theorem}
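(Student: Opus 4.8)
The plan is to prove this by the standard "counting words that come before $w$" argument used for lexicographic ranking. The rank of $w$ among length-$n$ bordered words over $\Sigma_k$ is one plus the number of length-$n$ bordered words that are lexicographically strictly smaller than $w$. Since all words under consideration have the same length $n$, a word $x = x_1x_2\cdots x_n$ satisfies $x < w$ if and only if there is some index $i$ with $x_1\cdots x_{i-1} = w_1\cdots w_{i-1}$ and $x_i < w_i$; that is, $x$ agrees with $w$ on a (possibly empty) prefix and then has a smaller letter in the next position. So I would first observe that the set of length-$n$ words strictly below $w$ is partitioned, according to the length $i-1$ of the longest common prefix with $w$ and the value $c = x_i \in \{1,\ldots,w_i-1\}$, into disjoint classes: the class indexed by $(i,c)$ consists of exactly those length-$n$ words having $w_1\cdots w_{i-1}c$ as a prefix.

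The key step is then to intersect this partition with the property of being bordered. A length-$n$ word in the $(i,c)$ class is bordered if and only if it is a length-$n$ bordered word with prefix $w_1\cdots w_{i-1}c$, and by definition the number of such words is exactly $B_k(w_1\cdots w_{i-1}c,\,n)$. Summing over all choices of $i$ from $1$ to $n$ and all $c$ from $1$ to $w_i - 1$ (where the inner sum is empty when $w_i = 1$) counts every length-$n$ bordered word strictly less than $w$ exactly once, because the classes are pairwise disjoint and their union is precisely the set of length-$n$ words below $w$. Adding $1$ to account for $w$ itself (which is bordered by hypothesis) yields the claimed formula. I would also note that each prefix $w_1\cdots w_{i-1}c$ has length $i \le n$, so $B_k(w_1\cdots w_{i-1}c, n)$ is well-defined in the sense required by the recurrence.

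The only subtlety — and the closest thing to an obstacle — is making sure the prefix-based partition is genuinely a partition: one must check that distinct pairs $(i,c)$ give disjoint sets (two words in different classes differ at the first index where they deviate from $w$, hence cannot be equal) and that every word $x < w$ lands in exactly one class (take $i$ to be the first index where $x$ and $w$ disagree, which exists and satisfies $x_i < w_i$ since $x<w$ and $|x|=|w|$). Once that bookkeeping is in place, the identification of each class's bordered members with $B_k(\cdot, n)$ is immediate from the definition of $B_k$, and the result follows.
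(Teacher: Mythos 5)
Your proposal is correct and follows essentially the same approach as the paper: both count the length-$n$ bordered words lexicographically smaller than $w$ by partitioning them according to the longest common prefix with $w$ and the first differing symbol $c < w_i$, identifying each class's size with $B_k(w_1\cdots w_{i-1}c, n)$. Your additional care in verifying that the classes are pairwise disjoint and exhaustive is a welcome (if routine) elaboration of a step the paper treats as immediate.
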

\begin{proof} 
The rank of $w$ is just the position of $w$ in the lexicographic listing of all length-$n$ bordered words. This is equal to $1$ plus the number of length-$n$ bordered words that are smaller than $w$ in lexicographic order. By definition, for any word $u=u_1u_2\cdots u_n$ that is smaller than $w$, there exists an $i$ such that $u_1u_2\cdots u_{i-1}=w_1w_2\cdots w_{i-1}$ and $u_i<w_i$. Thus, any length-$n$ bordered word that begins with $w_1w_2\cdots w_{i-1}c$ for some $c< w_i$ is smaller than $w$. Summing over all possible $i$ and $c<w_i$ we have that the number of length-$n$ bordered words that are smaller than $w$ is \[\sum_{i=1}^{n}\sum_{c=1}^{w_{i}-1}B_k(w_1w_2\cdots w_{i-1} c, n).\] 
\end{proof}

\begin{theorem}\label{theorem:rankU}
Let $n\geq 1$ and $k\geq 2$ be integers. Let $w=w_1w_2\cdots w_n$ be a length-$n$ unbordered word over $\Sigma_k$. Then
\[\rankU_k(w) = 2+\sum_{i=1}^{n}(w_{i}-1)k^{n-i}-\rankB_k(w).\]
\end{theorem}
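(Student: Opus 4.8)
The plan is to use the fact that every length-$n$ word over $\Sigma_k$ is either bordered or unbordered, so that the number of unbordered words below $w$ is obtained by complementation: subtract the number of bordered words below $w$ from the number of all words below $w$.

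First I would recall the classical base-$k$ ranking identity: the number of length-$n$ words over $\Sigma_k$ strictly smaller than $w$ in lexicographic order is $\sum_{i=1}^{n}(w_i-1)k^{n-i}$. This follows from exactly the prefix decomposition already used in the proof of Theorem~\ref{theorem:rankB}: a word $u$ is smaller than $w$ if and only if there is an index $i$ with $u_1\cdots u_{i-1}=w_1\cdots w_{i-1}$ and $u_i<w_i$, and for each such $i$ and each $c\in\{1,\dots,w_i-1\}$ there are exactly $k^{n-i}$ length-$n$ words beginning with $w_1\cdots w_{i-1}c$.

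Next I would observe that the double sum $\sum_{i=1}^{n}\sum_{c=1}^{w_i-1}B_k(w_1\cdots w_{i-1}c,n)$ appearing in Theorem~\ref{theorem:rankB} counts the number of length-$n$ bordered words strictly smaller than $w$, and that this interpretation is valid for an \emph{arbitrary} length-$n$ word $w$: the argument in the proof of Theorem~\ref{theorem:rankB} only invokes the lexicographic characterization above and never uses that $w$ itself is bordered. Thus, reading $\rankB_k(w)=1+\sum_{i=1}^{n}\sum_{c=1}^{w_i-1}B_k(w_1\cdots w_{i-1}c,n)$ for our (unbordered) word $w$, the quantity $\rankB_k(w)-1$ equals the number of length-$n$ bordered words strictly below $w$. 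Combining the two counts, since the length-$n$ words strictly below $w$ partition into the bordered ones and the unbordered ones, the number of length-$n$ unbordered words strictly below $w$ is $\sum_{i=1}^{n}(w_i-1)k^{n-i}-(\rankB_k(w)-1)$. As $w$ is itself unbordered, $\rankU_k(w)$ is $1$ plus this count, which simplifies to $2+\sum_{i=1}^{n}(w_i-1)k^{n-i}-\rankB_k(w)$.

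The only step requiring care — and the nearest thing to an obstacle — is the reinterpretation of $\rankB_k(\cdot)$ on an unbordered argument: one must note that the closed form of Theorem~\ref{theorem:rankB} remains meaningful, and still counts ``bordered words below,'' even when evaluated at a word that does not appear in the bordered listing. Everything else is a one-line complementation together with the standard base-$k$ rank formula.
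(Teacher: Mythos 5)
Your proposal is correct and follows essentially the same route as the paper: the paper also counts unbordered words below $w$ by complementation, writing the count for each prefix class $w_1\cdots w_{i-1}c$ as $k^{n-i}-B_k(w_1\cdots w_{i-1}c,n)$ and then splitting the double sum into $\sum_i(w_i-1)k^{n-i}$ and $\rankB_k(w)-1$, which is the same algebra you perform globally. Your explicit remark that $\rankB_k(w)-1$ still counts the bordered words below $w$ even when $w$ is unbordered is a point the paper uses only implicitly, so it is a welcome clarification rather than a divergence.
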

\begin{proof}
This proof follows the same structure as the proof of Theorem~\ref{theorem:rankB}. The rank of $w$ is equal to $1$ plus the number of length-$n$ unbordered words that are smaller than $w$. By definition, a length-$n$ word $u=u_1u_2\cdots u_n$ is smaller than $w$ if a there exists an $i$ such that $u_1u_2\cdots u_{i-1}=w_1w_2\cdots w_{i-1}$ and $u_i < w_i$. Therefore, any length-$n$ unbordered word that begins with $w_1w_2\cdots w_{i-1}c$ for $c<w_i$ is smaller than $w$. The number of length-$n$ unbordered words that begin with $w_1w_2\cdots w_{i-1}c$ is equal to the number length-$n$ bordered words that begin with $w_1w_2\cdots w_{i-1}c$ subtracted from all length-$n$ words that begin with $w_1w_2\cdots w_{i-1}c$. This is just equal to $k^{n-i} - B_k(w_1w_2\cdots w_{i-1}c, n)$. Summing over all possible $i$ and $c< w_i$, we have that the number of length-$n$ unbordered words that are smaller than $w$ is
\begin{align}
\sum_{i=1}^{n}\sum_{c=1}^{w_{i}-1}(k^{n-i}-B_k(w_1w_2\cdots w_{i-1} c, n)) &=\sum_{i=1}^{n}\sum_{c=1}^{w_{i}-1}k^{n-i}-\sum_{i=1}^{n}\sum_{c=1}^{w_{i}-1}B_k(w_1w_2\cdots w_{i-1} c, n)\nonumber \\
&= 1 + \sum_{i=1}^n(w_i-1)k^{n-i} - \rankB_k(w).\nonumber\end{align}
\end{proof}

%%%%%%%%%%%%%%%%%%%%%%%%%%%%%%%%%%%%%%%%%%%%%%%%%%%%%%%%%%%%%%
%%% Algorithm for computing rankB and rankU                %%%
%%%%%%%%%%%%%%%%%%%%%%%%%%%%%%%%%%%%%%%%%%%%%%%%%%%%%%%%%%%%%%
\begin{algorithm}[H] 
\small
\caption{Computing $\rankB_k(w)$ and  $\rankU_k(w)$ given $w$ and $k$.}\label{algo:computeRankB}
\begin{algorithmic}[1]

\Statex
%\State $\alpha \gets a_1a_2\cdots a_n$
\Function{rankB}{$w=w_1w_2\cdots w_n,k$}
    \State{$total\gets 0$}
    
    \For{$i\gets 1$ to $n$}
        \For{$c\gets 1$ to $w_{i}-1$}
            \State $total \gets total + B_k(w_1w_2\cdots w_{i-1}c,n)$
        \EndFor
    \EndFor
    \State \Return $1+total$
\EndFunction

\Statex
%\State $\alpha \gets a_1a_2\cdots a_n$
\Function{rankU}{$w=w_1w_2\cdots w_n,k$}

    \State{$total\gets 0$}
    
    \For{$i\gets 1$ to $n$}
        \State $total \gets total + (w_{i}-1)k^{n-i}$
    \EndFor
    \State \Return $2+total - \Call{rankB}{w}$

\EndFunction

\end{algorithmic}
\end{algorithm}

Suppose $w$ is of length $n$. While computing $\rankB_k(w)$ in Algorithm~\ref{algo:computeRankB}, we have that for every index $i$ of $w$, we loop through $w_i-1\in O(k)$ different symbols. For each of these symbols we compute $B_k(u,n)$ for some word $u$ of length $i$, which takes at most $O(n^2)$ time using $O(n)$ space by Theorem~\ref{theorem:calcB}. Thus, we can compute $\rankB_k(w)$ in $O(kn^3)$. Since we are only required to store the length-$n$ word $w$, an integer variable which stores the rank, and a single length-$n$ integer array that we reuse to calculate $B_k(u,n)$, we have that computing $\rankB_k(w)$ uses only $O(n)$ space. Since $\rankU_k(w)$ uses $\rankB_k(w)$ and does only $O(n)$ extra work, it also runs in $O(kn^3)$ time using $O(n)$ space.

\begin{theorem}
    Let $n\geq 1$ and $k\geq 2$ be integers. Let $w$ be a length-$n$ word over $\Sigma_k$. Then $\rankB_k(w)$ can be computed in $O(kn^3)$ time using $O(n)$ space.
\end{theorem}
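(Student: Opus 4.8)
The plan is to assemble the bound directly from the formula in Theorem~\ref{theorem:rankB} together with the cost estimate for a single evaluation of $B_k$ in Theorem~\ref{theorem:calcB}. First I would invoke Theorem~\ref{theorem:rankB} to write
\[\rankB_k(w) = 1+\sum_{i=1}^{n}\sum_{c=1}^{w_{i}-1}B_k(w_1w_2\cdots w_{i-1} c, n),\]
and observe that the outer sum has $n$ terms while the inner sum has $w_i - 1 \le k-1$ terms, so the total number of calls to $B_k$ is at most $n(k-1)\in O(kn)$. Each argument $w_1w_2\cdots w_{i-1}c$ is a word of length $i\le n$, so by Theorem~\ref{theorem:calcB} each individual call $B_k(\cdot,n)$ runs in $O(n^2)$ time using $O(n)$ space. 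Multiplying the number of calls by the per-call cost gives $O(kn)\cdot O(n^2)=O(kn^3)$ time, which is the claimed bound; adding the $O(n)$ time to initialize \textsc{total} and to iterate the loops is absorbed into this.

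For the space bound I would argue that the computation keeps only: the input word $w$ (length $n$), the running integer accumulator \textsc{total}, the loop counters $i$ and $c$, and the scratch storage needed by a single invocation of the $B_k$ routine. The key point, already noted in the paragraph preceding the statement, is that the $O(n)$-space working array used inside the dynamic program for $B_k$ can be reused across all the $O(kn)$ calls rather than stored simultaneously, so the space never exceeds $O(n)$. One should also confirm that forming each prefix argument $w_1\cdots w_{i-1}c$ does not require extra space, which is immediate since it can be obtained by temporarily overwriting position $i$ of a length-$n$ buffer holding $w$.

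I do not expect a genuine obstacle here, since the theorem is essentially a bookkeeping corollary of the two earlier theorems; the only point that warrants a sentence of care is making explicit that the failure-function and auto-correlation precomputations (Lemma~\ref{lemma:unbPref}, Lemma~\ref{lemma:bordInd}) invoked inside each $B_k$ call cost only $O(n)\subseteq O(n^2)$ and $O(n)$ space, and hence do not change either bound. If one wanted a sharper statement one could note the true cost is $O(n^2\sum_i (w_i-1))$, but for the stated $O(kn^3)$ bound the crude count suffices.
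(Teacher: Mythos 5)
Your proposal is correct and matches the paper's own argument, which likewise counts $O(kn)$ evaluations of $B_k(\cdot,n)$ at $O(n^2)$ time each (Theorem~\ref{theorem:calcB}) and reuses a single length-$n$ working array to keep the space at $O(n)$. No substantive differences.
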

\begin{theorem}
    Let $n\geq 1$ and $k\geq 2$ be integers. Let $w$ be a length-$n$ word over $\Sigma_k$. Then $\rankU_k(w)$ can be computed in $O(kn^3)$ time using $O(n)$ space.
\end{theorem}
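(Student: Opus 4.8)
The plan is to obtain this statement as an immediate consequence of Theorem~\ref{theorem:rankU} together with the preceding theorem on the cost of computing $\rankB_k$. By Theorem~\ref{theorem:rankU} we have
\[\rankU_k(w) = 2+\sum_{i=1}^{n}(w_{i}-1)k^{n-i}-\rankB_k(w),\]
so it suffices to bound the cost of evaluating the right-hand side. The term $\rankB_k(w)$ costs $O(kn^3)$ time and $O(n)$ space by the previous theorem, so the only remaining work is the sum $\sum_{i=1}^{n}(w_i-1)k^{n-i}$ together with a constant number of additions and subtractions.

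To handle that sum cheaply, I would scan $w$ once while maintaining a running power of $k$: traverse $i$ from $n$ down to $1$, keep a variable $p$ equal to $k^{n-i}$ (initialize $p=1$, multiply by $k$ at each step), and accumulate $(w_i-1)\cdot p$ into a single integer variable. Under the unit-cost RAM model each multiplication and addition is constant time, so the sum is computed in $O(n)$ time using only $O(1)$ additional space beyond the stored word $w$. Combining the result with the constant $2$ and with $\rankB_k(w)$ is then $O(1)$.

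Hence the running time is dominated by the computation of $\rankB_k(w)$, namely $O(kn^3)$, and the extra storage used by $\rankU_k$ over $\rankB_k$ is a constant number of integer variables — in particular the same single length-$n$ array can be reused — so the space bound remains $O(n)$. I do not anticipate any genuine obstacle here: the only point that needs a line of care is verifying that the auxiliary sum is evaluated in $o(kn^3)$ time and without allocating a new array, so that it is absorbed into the stated bound.
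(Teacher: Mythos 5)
Your proposal is correct and matches the paper's argument exactly: the paper also observes that $\rankU_k(w)$ is obtained from the formula of Theorem~\ref{theorem:rankU} by one call to $\rankB_k(w)$ (costing $O(kn^3)$ time and $O(n)$ space) plus $O(n)$ extra work for the sum $\sum_{i=1}^{n}(w_i-1)k^{n-i}$, which is absorbed into the stated bounds. Your explicit running-power-of-$k$ accumulation is just a slightly more detailed account of that same $O(n)$ extra work.
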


\section{Unranking}\label{section:unrank}
In this section we show how to unrank bordered and unbordered words. We describe the algorithm to compute the length-$n$ bordered word with rank $r$. The same algorithm can be adapted to unrank unbordered words as well.

Suppose $w=w_1w_2\cdots w_n$ is a length-$n$ bordered word with rank $r$ that we are trying to determine. First, we initialize $w$ to be $1^n$, the lexicographically smallest length-$n$ word. Then, starting with $i=1$, we determine $w_i$ by applying binary search on $1,2,\ldots, k$ to find the largest $x$ such that the rank of $w_1w_2\cdots w_{i-1}x 1^{n-j}$ is less than or equal to $r$. Then we set $w_i=x$, and move on to $w_{i+1}$. 

The reason we are choosing the largest $x$ is due to the fact that $\rankB_k$ can be applied to both bordered and unbordered words. It is easy to show that the largest word with a given rank is the desired bordered word at that rank. Observe that when unranking unbordered words, choosing the largest word at a particular rank can output a bordered word if the rank is larger than the number of unbordered words. In this case, the output is $kk\cdots k$.

%This algorithm clearly correctly determines all $w_i$ for $i<n$. However, to see that $w_n$ is also determined correctly notice that for any $x< w_n$ such that $w_1w_2\cdots w_{n-1}y$ is unbordered for all $y$, $x\leq y < w_n$, we have that the number of bordered words

%%%%%%%%%%%%%%%%%%%%%%%%%%%%%%%%%%%%%%%%%%%%%%%%%%%%%%%%%%%%%%
%%% Algorithm for computing unrankB                        %%%
%%%%%%%%%%%%%%%%%%%%%%%%%%%%%%%%%%%%%%%%%%%%%%%%%%%%%%%%%%%%%%

\begin{algorithm}[H] 
\small
\caption{Unranking bordered words given a rank $r$, a length $n$, alphabet size $k$.}\label{algo:computeUnrankB}
\begin{algorithmic}[1]

\Statex
%\State $\alpha \gets a_1a_2\cdots a_n$
\Function{unrankB}{$r, n, k$}
    \State{$w_1w_2\cdots w_n\gets 11\cdots 1$}
   % \If{$b=0$} $w_n\gets k-1$\EndIf
    \For{$i\gets 1$ to $n$}
        \State $left \gets 1$
        \State $right \gets k$
        \While{$left < right$}
            \State $save \gets w_i$
            \State $mid \gets \lceil(left+right)/2\rceil$
            \State $w_i \gets mid$
            \If{$\Call{rankB}{w_1w_2\cdots w_n,k} \leq r$} $left \gets mid$
            \Else
                \State $w_i\gets save$
                \State $right \gets mid$
            \EndIf
            
        \EndWhile
    \EndFor
    \State \Return $w_1w_2\cdots w_n$
\EndFunction

\end{algorithmic}
\end{algorithm}
For every index $i$ of $w$ we perform a binary search on the alphabet of size $k$. For each iteration in the binary search we calculate $\rankB_k(w)$. There are $n$ indices of $w$, binary search on the alphabet of size $k$ takes $O(\log k)$ time, and calculating $\rankB_k(w)$ takes $O(kn^3)$ time. Thus, running $\Call{UnRankB}{r,n,k}$ takes $O(n^4k\log k)$ time. Running $\Call{UnRankB}{r,n,k}$ also takes $O(n)$ space since it only requires us to use a length-$n$ word, and a constant number of integer variables. Additionally, when calculating $\rankB_k(w)$ within the unranking procedure we can reuse a length-$n$ integer array to store $B_k(u,n)$. To adapt the algorithm to unrank unbordered words, just replace $\Call{rankB}{w,k}$ with $\Call{rankU}{w,k}$.

\begin{theorem}
Let $n\geq 1$ and $k\geq 2$ be integers. A bordered word $w\in\Sigma_k^n$ with rank $r$ can be computed in $O(n^4k\log k)$ time using $O(n)$ space.
\end{theorem}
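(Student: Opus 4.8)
The plan is to establish correctness and the stated complexity bounds for \textsc{unrankB} separately, since the algorithm's structure makes both relatively routine given the earlier results. First I would argue correctness. The key invariant is that after processing index $i$, the prefix $w_1w_2\cdots w_i$ equals the length-$i$ prefix of the (unique) length-$n$ bordered word of rank $r$. I would show this by induction on $i$, relying on the monotonicity of the map $x \mapsto \rankB_k(w_1\cdots w_{i-1}\,x\,1^{n-i})$: as $x$ increases, the word $w_1\cdots w_{i-1}\,x\,1^{n-i}$ increases lexicographically, and since $\rankB_k$ counts (one plus) the number of bordered words strictly below its argument when applied to any word, this count is nondecreasing in $x$. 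Hence the binary search correctly finds the largest $x$ with $\rankB_k(w_1\cdots w_{i-1}\,x\,1^{n-i}) \leq r$, and one checks that this $x$ is forced to agree with the true rank-$r$ word's $i$-th letter. The subtle point I would spell out is why ``largest word with rank $\le r$'' is the right tie-breaking rule: because $\rankB_k$ is defined on all words (bordered or not), several consecutive words can share the same value of $\rankB_k$, namely a bordered word of rank $r$ together with all unbordered words immediately following it up to the next bordered word; taking the largest such word lands exactly on the next bordered word, which is the one of rank $r$. Here I would invoke Theorem~\ref{theorem:rankB} for the formula and the fact that $\rankB_k$ makes sense on arbitrary words.

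Next I would handle the running time. The outer loop runs $n$ times. Each iteration performs a binary search on $\{1,\ldots,k\}$, which takes $O(\log k)$ probes. Each probe calls \textsc{rankB} on a length-$n$ word, and by the analysis preceding Theorem~5.3 (the running-time claim for $\rankB_k$) this costs $O(kn^3)$ time. Multiplying gives $O(n \cdot \log k \cdot kn^3) = O(n^4 k \log k)$ time, as claimed.

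For the space bound, I would note that the algorithm maintains only the length-$n$ array $w_1\cdots w_n$ and a constant number of integer scalars ($left$, $right$, $mid$, $save$, and the loop index). Each call to \textsc{rankB} inside the binary search uses $O(n)$ space by Theorem~\ref{theorem:calcB} and the subsequent discussion, and—crucially—this workspace (in particular the length-$n$ integer array used to tabulate $B_k(u,i)$ via dynamic programming) can be reused across probes rather than accumulated, so the total is $O(n)$.

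I do not expect a serious obstacle here; the one place requiring care is the correctness argument for the tie-breaking rule, i.e., making precise the claim that the largest word whose $\rankB_k$-value is $\le r$ is bordered and has rank exactly $r$. Concretely I would argue: let $w^\star$ be the length-$n$ bordered word of rank $r$; then $\rankB_k(w^\star) = r$, and for every word $w'$ with $w^\star < w' \le$ (the next bordered word after $w^\star$) we have $\rankB_k(w') = r$ still if $w'$ is unbordered (it contributes nothing new) while the next bordered word itself has $\rankB_k$-value $r+1$. Thus among words with $\rankB_k \le r$ that share the prefix structure forced by the binary search, the maximum is $w^\star$ itself, which the greedy letter-by-letter maximization recovers. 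The analogous statement for \textsc{unrankU}, including the boundary behavior where an out-of-range rank yields $k^n = kk\cdots k$, follows the same reasoning with $\rankU_k$ in place of $\rankB_k$.
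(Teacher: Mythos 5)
Your running-time and space analysis is exactly the paper's: $n$ positions, $O(\log k)$ binary-search probes per position, one call to $\rankB_k$ per probe costing $O(kn^3)$ by the preceding analysis, and reuse of a single length-$n$ integer array inside those calls, giving $O(n^4k\log k)$ time and $O(n)$ space. That part matches the paper and is fine.

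The problem is in the tie-breaking argument --- the one point you yourself flag as needing care --- which you state backwards. Let $w^\star$ be the bordered word of rank $r$, so that exactly $r-1$ bordered words lie strictly below it and $\rankB_k(w^\star)=r$. Any word $w'$ with $w'>w^\star$ has at least $r$ bordered words strictly below it (the $r-1$ below $w^\star$, plus $w^\star$ itself), hence $\rankB_k(w')\geq r+1$. So the words sharing the value $r$ are $w^\star$ together with the unbordered words immediately \emph{preceding} it (those strictly above the bordered word of rank $r-1$), not the ones following it. Your claim that an unbordered $w'$ just above $w^\star$ still has $\rankB_k(w')=r$ because ``it contributes nothing new'' overlooks that $w^\star$ itself is the new contribution. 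Worse, if your description were right, the largest word with $\rankB_k$-value at most $r$ would be an \emph{unbordered} word sitting just below the bordered word of rank $r+1$, and the algorithm would return the wrong answer; your stated conclusion (``the maximum is $w^\star$ itself'') does not follow from the set you describe. The correct statement --- that $\rankB_k$ is nondecreasing in lexicographic order and the words with $\rankB_k\leq r$ are precisely the words $\leq w^\star$, so greedily maximizing each letter subject to $\rankB_k\leq r$ recovers $w^\star$ --- is what the paper asserts and what you need; as written, your justification contradicts your own conclusion.
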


\begin{theorem}
 Let $n\geq 1$ and $k\geq 2$ be integers. An unbordered word $w\in\Sigma_k^n$ with rank $r$ can be computed in $O(n^4k\log k)$ time using $O(n)$ space.
\end{theorem}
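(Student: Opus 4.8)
The plan is to mirror the analysis already carried out for unranking bordered words in Algorithm~\ref{algo:computeUnrankB}, but using $\Call{rankU}{w,k}$ in place of $\Call{rankB}{w,k}$ as indicated at the end of the section. First I would invoke the adapted unranking procedure: initialize $w\gets 1^n$, then for each index $i$ from $1$ to $n$ perform a binary search on $\Sigma_k=\{1,\ldots,k\}$ to find the largest symbol $x$ such that $\rankU_k(w_1\cdots w_{i-1}x1^{n-i}) \leq r$, set $w_i\gets x$, and proceed. The correctness argument has two parts. The first part is that after the $i$-th iteration, $w_1\cdots w_i$ is a prefix of the (unique) length-$n$ word of rank $r$ in the appropriate listing; this follows by induction, using that $\rankU_k$ restricted to the lexicographically sorted words is monotone in the usual prefix-extension sense and that $\rankU_k$ is, as noted in the paragraph preceding the theorem, well-defined on all length-$n$ words (not merely unbordered ones) with the property that the \emph{largest} word attaining a given value is the one we want. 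The second part is the edge case flagged in the text: if $r$ exceeds $u_k(n)$, the number of length-$n$ unbordered words, then the procedure outputs $k^n$; I would state this as part of the theorem's scope (i.e., we assume $1\le r\le u_k(n)$, or else we accept the documented fallback output).

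Next I would do the complexity bookkeeping, which is entirely routine and parallels the bordered case. There are $n$ outer iterations. Each binary search over $\{1,\ldots,k\}$ uses $O(\log k)$ rounds. Each round makes one call to $\rankU_k$ on a length-$n$ word, which by the theorem for $\rankU_k$ costs $O(kn^3)$ time using $O(n)$ space. Multiplying, the total time is $O(n \cdot \log k \cdot kn^3) = O(n^4 k\log k)$. For space, observe that we only ever store the length-$n$ word $w$, a constant number of integer scalars ($left$, $right$, $mid$, $save$, and the loop counters), and a single reusable length-$n$ integer array inside the $\rankU_k$ computation for the dynamic-programming table of $B_k(u,\cdot)$ from Theorem~\ref{theorem:calcB}; hence the space is $O(n)$. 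Since $\rankU_k(w) = 2 + \sum_i (w_i-1)k^{n-i} - \rankB_k(w)$ by Theorem~\ref{theorem:rankU}, and the extra sum is $O(n)$ additional work, none of these bounds degrades relative to the bordered case.

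The main obstacle is the correctness of the binary search on the alphabet, specifically justifying that picking the \emph{largest} $x$ with $\rankU_k(w_1\cdots w_{i-1}x1^{n-i}) \le r$ is the right choice even though $\rankU_k$ is being evaluated on words $w_1\cdots w_{i-1}x1^{n-i}$ that may themselves be bordered. The resolution is the observation already made in the prose: among all length-$n$ words sharing a given $\rankU_k$ value, the lexicographically largest is unbordered (when $r\le u_k(n)$), so taking the largest feasible $x$ at each step never ``overshoots'' past the target unbordered word; and the monotonicity of $\rankU_k$ along the lexicographic order guarantees the feasible set of $x$ is a prefix $\{1,\ldots,x^\ast\}$ of the alphabet, so binary search correctly returns $x^\ast$. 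I would write this as a short lemma-style remark before the complexity count, then assemble the two pieces into the stated bound.

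\begin{proof}
Apply Algorithm~\ref{algo:computeUnrankB} with $\Call{rankB}{\cdot,k}$ replaced by $\Call{rankU}{\cdot,k}$. Correctness follows by induction on the index $i$: since $\rankU_k$ is monotone along lexicographic order and, for $r$ at most the number $u_k(n)$ of length-$n$ unbordered words, the lexicographically largest length-$n$ word of a given $\rankU_k$-value is unbordered, the binary search at step $i$ correctly extends the already-determined prefix $w_1\cdots w_{i-1}$ by the unique next symbol of the target word; after $n$ steps $w$ is exactly the unbordered word of rank $r$ (and is $k^n$ in the documented fallback case $r>u_k(n)$). For the complexity, there are $n$ outer iterations, each performing $O(\log k)$ binary-search rounds, and each round calls $\rankU_k$ on a length-$n$ word, costing $O(kn^3)$ time and $O(n)$ space by the preceding theorem; hence the total running time is $O(n\cdot\log k\cdot kn^3)=O(n^4k\log k)$. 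The algorithm stores only the length-$n$ word $w$, a constant number of integer scalars, and one reusable length-$n$ array for the dynamic-programming computation of $B_k$, so it uses $O(n)$ space.
\end{proof}
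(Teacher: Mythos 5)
Your proposal is correct and follows essentially the same route as the paper: run Algorithm~\ref{algo:computeUnrankB} with the unbordered ranking function in place of the bordered one, justify the binary search by the monotonicity of the rank along lexicographic order together with the ``largest word at a given rank'' observation (including the fallback output $kk\cdots k$ when $r$ exceeds the number of unbordered words), and multiply $n$ positions by $O(\log k)$ binary-search rounds by the $O(kn^3)$ cost of one rank computation to get $O(n^4k\log k)$ time and $O(n)$ space. The only difference is that you spell out the correctness of the greedy prefix extension somewhat more explicitly than the paper does, which is a welcome addition rather than a deviation.
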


\section{Conclusions and Open Problems}\label{section:conclusion}
In this paper we presented the first efficient ranking and unranking algorithms for bordered and unbordered words. We gave a $O(kn^3)$ ranking algorithm that uses only $O(n)$ space for bordered and unbordered words. We also showed how to use this ranking algorithm to unrank bordered and unbordered words in $O(n^4k\log k)$ time using $O(n)$ space. %Finally, we presented an algorithm to generate all bordered and unbordered words in constant amortized time per word.
We conclude by posing an open problem.
\begin{itemize}
    \item Can ranking and unranking bordered and unbordered words be done more efficiently? Can they be done in quadratic time?  
    \begin{itemize}
        \item It seems believable that the ranking and unranking algorithms in this paper can be improved by a factor of $n$. We briefly lay out the reasoning. As is, the recurrence $B_k(u,n)$ takes $O(n^2)$ time to determine. However, if we compute $B_k(u,n) - kB_k(u,n-1)$, all but a constant number of terms cancel out in the case that $n> 2|u|$. In the case that $n\leq 2|u|$, we are still left with $O(n)$ terms in the summation. If this summation could be computed in constant time, then computing $B_k(u,n)$ would take $O(n)$ time, effectively reducing the time complexity of ranking and unranking by a factor of $n$.
    \end{itemize}
\end{itemize}
\bibliographystyle{unsrt}
\bibliography{abbrevs,main}

%\newgeometry{left=25mm,right=25mm,top=25mm,bottom=25mm}

%\scriptsize
%\begin{code}	
%#include<stdio.h>

%\end{code}
%\restoregeometry

\end{document}